\newtheorem{theorem}{Theorem}
\newtheorem{lemma}{Lemma}
\numberwithin{equation}{section}
\begin{document}
\pagestyle{plain}




\title{\LARGE\textbf{Bosonic Part of $4d$  $N=1$ Supersymmetric Gauge Theory with General Couplings: Local Existence}}

\author{{Fiki T. Akbar$^{\sharp}$, Bobby E. Gunara}$^{\flat,\sharp}$, Triyanta$^{\flat,\sharp}$, Freddy P. Zen$^{\flat,\sharp}$ \\ \\
$^{\flat}$\textit{\small Indonesian Center for Theoretical and
Mathematical Physics (ICTMP)}
\\ {\small and} \\
$^{\sharp}$\textit{\small Theoretical Physics Laboratory}\\
\textit{\small Theoretical High Energy Physics and Instrumentation Research Group,}\\
\textit{\small Faculty of Mathematics and Natural Sciences,}\\
\textit{\small Institut Teknologi Bandung}\\
\textit{\small Jl. Ganesha no. 10 Bandung, Indonesia, 40132}\\
\small email: ft\_akbar@students.itb.ac.id, bobby@fi.itb.ac.id, triyanta@fi.itb.ac.id, fpzen@fi.itb.ac.id}

\date{}

\maketitle




\begin{abstract}

In this paper, we prove the local existence of the bosonic part of
$N=1$ supersymmetric gauge theory in four dimensions with general
couplings. We start with the Lagrangian of the vector and chiral multiplets 
with general couplings and scalar potential turned on.
Then, for the sake of simplicity, we set all fermions vanish at
the level of equations of motions, so we only have the bosonic
parts of the theory. We apply Segal's general theory to show the
local existence of solutions of equations of motions by
taking K\"ahler potential to be bounded above by $U(n)$ symmetric
K\"ahler potential and the first derivative of gauge couplings to
be at most linear growth functions.

\end{abstract}




\section{Introduction}

In 1963, I. Segal  has developed a method to prove the existence
and uniqueness of the solutions of a semi-linear evolution equations
using a semi-group \cite{Segal1}. Then, ten years later, he
applied it to prove the existence of local and global solutions
for four dimensional Yang-Mills equations in the temporal gauge
condition \cite{Segal2}. Such a study has been extended to the
case of Yang-Mills theory coupled to scalar fields in three
dimensions \cite{Ginibre}, and in four dimensions \cite{Eardley,
Eardley2}. \\
\indent Our interest here is to extend the results in
\cite{Eardley} to the case of minimal ($N=1$) supersymmetric
Yang-Mills theory coupled chiral multiplets. This theory has
become a prominent subject over four decades since it might
provide solutions to major problems in the Standard Model of
particle physics such as the unification of gauge couplings and
the hierarchy problems.\\
\indent In this paper, we prove the local existence of solutions
of $N = 1$ supersymmetry gauge theory in four dimensions with
general couplings. Our starting point is to consider $N=1$
Lagrangian consists of chiral and vector with general couplings
such that we have a nonlinear $\sigma$-model with K\"ahler
metric, general analytic gauge kinetic functions
determined by holomorphic functions, and the scalar potential.
Then, we derive field equations of motions and setting all the
fermionic field to be zero at this level for the sake of
simplicity. Thus, we only have an effective bosonic theory that
describes the interaction between the bosonic field $(\phi,A)$
where $\phi$ is the complex scalar fields and $A$ is the gauge
fields. \\
\indent By assuming that the temporal component of gauge fields
vanish (analogue with the temporal gauge in Yang-Mills theory) and
introducing new fields $(\pi,E)$, one can then transform the
equation of motions into a semi-linear form which contains linear
and non-linear terms. For our analysis, we take the fields
$u=(A,E,\phi,\pi ,\bar{\phi},\bar{\pi})$ lying in
$\mathcal{H}=(H_{2} \times H_{1})^{3}$ where $H_{p}$ denotes a
Sobolev space. We show that the linear terms are globally defined
in $\mathcal{H}$ and generates a one parameter semigroup. Finally,
by Segal general theory \cite{Segal1}, the local existence of
semi-linear evolution equation is established by showing that the
non-linear parts satisfy local Lipshitz condition.\\
\indent Since we have a generalized semi-linear evolution
equation, we have to take several assumptions on the general
couplings such that Segal's general theory can be used to our
problem. First, we assume that K\"ahler
potential is bounded above by $U(n)$ symmetric K\"ahler potential
and we derive several estimates for K\"ahler potential and
Christoffel symbol. These estimates can be used to eliminate the
quantity associated with K\"ahler metric in our analysis.
Second, we take some conditions on the gauge kinetic
couplings, namely the derivative of the gauge coupling must be at
most a linear growth.  Finally, we assume that the scalar
potential has to be at least $C^{3}$-functions and its derivative
is locally Lipshitz function.\\
\indent Another problem that arises is the constraint equation which
can be solved by the technique developed in \cite{Eardley}. This
technology can be mentioned in order. We firstly decompose the $E$
field into unique transverse ($E_{T}$) and longitudinal parts
($E_{L}$). Then by introducing a new field $E_{C}$ such that
$E_{C}=E_{L}$ if the constraints fulfilled, we modify the original
equation of motions by replacing $E_{L}$ with $E_{C}$. Using the
above mentioned conditions on general couplings, we prove that the
non-linear parts is locally Lipshitz function. Hence, they admit
local solutions. At the end, we show that the solutions of the
modified equations with the constraints satisfied are the solution
of the original equation of motions.\\
\indent The organization of the paper can be mentioned as follow.
We shortly review a four dimensional $N=1$ supersymmetric gauge
theory in which the vector multiplets are coupled to
arbitrary chiral multiplets in Section 2. Section 3 is devoted to
discuss several aspects of field equation of motions including a
modification of equation of motions to solve the constraint
problems. In section 4, we discuss the internal scalar
manifold and derive several estimates. In section 5, we prove that
the non-linear part of equation of motions satisfies Lipshitz
condition and finally prove the local existence.

\section{General Couplings of Chiral and Vector Multiplets}
\label{GCCVMs}

In this section, we review shortly  four dimensional $N=1$
supersymmetric gauge theory in which the vector multiplets are
coupled generally to arbitrary chiral multiplets. Here, we only write
terms which are useful for our analysis in the paper. For an
excellent review, interested reader can further consult,
for example, \cite{Wess,DF,ADF}.\\
\indent The theory consists of $n_v$ vector multiplets,
$(A^a_{\mu}, \lambda^a)$ coupled to $n_c$ chiral multiplets,
$(\phi^i, \chi^i)$ where the Latin alphabets $a,b =1,...,n_v$, and
$i,j=1,...,n_c$ show the number of multiplets, while the Greek
alphabets $\mu,\nu =0,...,3$ show the spacetime indices. In the
vector multiplets we have gauge fields $A^a_{\mu}$ together with
their fermionic partners $\lambda^a$. On the other side, the
chiral multiplets contain complex scalars $\phi^i$ and their
fermionic partners $\chi^i$.\\
\indent Furthermore, $N=1$ supersymmetry demands the following
conditions: First, the scalars $\phi^i$ span a K\"ahler manifold
endowed with metric $g_{i\bar{j}} = \partial_i
\partial_{\bar{j}}K $ where $K \equiv K(\phi, \bar{\phi})$
is a real function called K\"ahler potential. Second, there
exists a set of holomorphic functions, namely $(f_{ab}, X^i_a, W)$
which  are gauge couplings, Killing vectors, and a superpotential,
respectively. Finally, the existence of a real function called
scalar potential which can be written as
\begin{equation}
V  =  g^{i\bar{j}} \, \partial_i W
\partial_{\bar{j}} \bar{W} +
\frac{1}{8}h^{ab} P_{a} P_{b} \ , \label{potensial}
\end{equation}
where $P_a$ are real functions called Killing potentials (or
momentum maps), determined by $X^i_a$ via
\begin{equation}
X^{i}_{a} = \frac{i}{2}g^{i\bar{j}} \, \partial_{\bar{j}}P_{a} \ ,
\label{Killingeq}
\end{equation}
with $h^{ab}$ is the inverse of $h_{ab} \equiv
{\mathrm{Re}}f_{ab}$. Then, one can write down the bosonic part of
the $N=1$ Lagrangian as
\begin{equation}
\mathcal{L}  =
-g_{i\bar{j}}D^{\mu}\phi^{i}D_{\mu}\bar{\phi}^{\bar{j}} -
\frac{1}{4}h_{ab}\mathcal{F}^{a}_{\mu\nu}\mathcal{F}^{b\mu\nu}+\frac{1}{4}k_{ab}\mathcal{F}^{a}_{\mu\nu}\tilde{\mathcal{F}}^{b\mu\nu}
- V \ , \label{Lagrange}
\end{equation}
where $k_{ab} \equiv {\mathrm{Im}}f_{ab}$, the covariant
derivative $D_{\mu}\phi^{i} \equiv
\partial_{\mu}\phi^{i} + X^{i}_{a}A^{a}_{\mu}$, and the gauge field strength $\mathcal{F}^{a}_{\mu\nu}
\equiv \partial_{\mu}A^{a}_{\nu}-\partial_{\nu}A^{a}_{\mu} +
f^{a}_{bc}A^{b}_{\mu}A^{c}_{\nu}$. The dual field
$\tilde{\mathcal{F}}^{a\mu\nu}$ is defined as
$\tilde{\mathcal{F}}^{a\mu\nu}  \equiv
\frac{1}{2}\epsilon^{\mu\nu\rho\sigma}\mathcal{F}_{\rho\sigma}^{a}$.
It is worth to mention that Lagrangian (\ref{Lagrange}) is
invariant under the following supersymmetry transformation of the
fields up to three-fermion terms
\begin{eqnarray}
\delta\lambda_{\bullet}^a &=& \frac{1}{2}
\left({\mathcal{F}}^a_{\mu\nu}
- {\mathrm{i}}\widetilde{\mathcal{F}}^a_{\mu\nu}\right) \gamma^{\mu\nu}\epsilon_{\bullet}
+ 2 {\mathrm{i}} h^{ab} P_b \, \epsilon_{\bullet} \;, \nonumber\\
\delta\chi^i &=& {\mathrm{i}}\partial_{\nu} \phi^i \, \gamma^{\nu}
\epsilon^{\bullet} + 2 g^{i\bar{j}} \partial_{\bar{j}} \bar{W} \epsilon_{\bullet} \;,\label{susytr}\\
\delta A^a_{\mu} &=& \frac{\mathrm{i}}{2}
\bar{\lambda}_{\bullet}^a \gamma_{\mu}\epsilon^{\bullet} +
\frac{\mathrm{i}}{2} \bar{\epsilon}_{\bullet}
\gamma_{\mu} \lambda^{\bullet a}\;, \nonumber\\
\delta \phi^i &=& \bar{\chi}^i \epsilon_{\bullet} \;.\nonumber
\end{eqnarray}
Additionally, in the theory one can replace $P_a$ by $P_a + \xi_a$
where $\xi_a$ are real constants which give rise Fayet-Iliopoulos
term.

In our analysis, we assume that the scalar potential to be at least a $C^{2}$ function and a satisfies the local Lipshitz condition
\begin{equation}
\|\partial_{j}V(\phi')-\partial_{j}V(\phi)\| \leq C(\|\phi'\|,\|\phi\|)\|\phi'-\phi\| \:, \label{ScalarPotential}
\end{equation}
where $C(\|\phi'\|,\|\phi\|)$ is a bounded function depend on
$\|\phi\|$. The condition above implies that the holomorphic
superpotential has to be at least a $C^{3}$ function.

\section{Field Equations of Motions}
\label{FEMs}

This section is devoted to discuss several aspect of field
equation of motions. In particular, we take all fermions to be
trivial at this level. Thus, the gauge fields and the scalars are
the main ingredients of our analysis in this paper. To simplify
the analysis we take a condition $A_0 = 0$ which is not a gauge
condition since in general Lagrangian (\ref{Lagrange}) is no
longer gauge invariant \footnote{For renormalizable
Yang-Mills-Higgs theory with constant $f_{ab}$ and $g_{i\bar{j}}$
the condition $A_0 = 0$ is called temporal gauge, see for example,
\cite{Akbar}.}.\\
\indent First of all, the gauge field equation of motions is given
by
\begin{equation}
D_{\mu}\left(h_{ab}\mathcal{F}^{b\mu\nu}-k_{ab}\tilde{\mathcal{F}}^{b\mu\nu}\right)
= g_{i\bar{j}}\left(X^{i}_{a}D^{\nu}\bar{\phi}^{\bar{j}} +
\bar{X}^{\bar{j}}_{a}D^{\nu}\phi^{i}\right) \ , \label{gaugeEOM}
\end{equation}
with
\begin{equation}
D_{\mu}\left(h_{ab}\mathcal{F}^{b\mu\nu}\right) =
\partial_{\mu}\left(h_{ab}\mathcal{F}^{b\mu\nu}\right) +
f_{ac}^{d}A^{c}_{\mu}\left(h_{db}\mathcal{F}^{b\mu\nu}\right) \ .
\end{equation}
The field strength tensor $\mathcal{F}^{a}_{\mu\nu}$ satisfies
Bianchi identity,
\begin{equation}
D_{\mu}\tilde{\mathcal{F}}^{b\mu\nu}=0 \ . \label{Bianchiid}
\end{equation}
By defining
\begin{equation}
E^{as}=-\mathcal{F}^{a0s}=-\partial^{0}A^{ar} \ ,
\end{equation}
and using (\ref{Bianchiid}), we can rewrite (\ref{gaugeEOM}) as
\begin{eqnarray}
\frac{\partial E^{ar}}{\partial t} & = & \partial_{s}\partial^{s}A^{ar} - \partial^{r}\partial_{s}A^{as} + f^{a}_{bc}\partial_{s}\left(A^{as}A^{ar}\right) + h^{ab}\mathcal{F}^{csr}\partial_{s}h_{bc} + h^{ab}h_{de}f^{d}_{bc}A^{c}_{s}\mathcal{F}^{esr}\nonumber\\
& & -h^{ab}E^{cr}\partial_{0}h_{bc}-h^{ab}\tilde{\mathcal{F}}^{c0r}\partial_{0}k_{bc} - h^{ab}\tilde{\mathcal{F}}^{csr}\partial_{s}k_{bc} - h^{ab}\left(k_{de}f^{d}_{bc} - k_{bd}f^{d}_{ce}\right)A^{c}_{s}\mathcal{F}^{esr}\nonumber\\
& & -h^{ab}g_{i\bar{j}}\left(X^{i}_{a}D^{r}\bar{\phi}^{\bar{j}} + \bar{X}^{\bar{j}}_{a}D^{r}\phi^{i}\right),
\end{eqnarray}
together with the constraint equations
\begin{equation}
\mathcal{C}^a(t) = -\partial_{s}E^{as} + 4\pi \rho^a \ ,
\label{constraint}
\end{equation}
where
\begin{eqnarray}
4\pi\rho^{a} & = & h^{ab}g_{i\bar{j}}\left(X^{i}_{b}D^{0}\bar{\phi}^{\bar{j}} + \bar{X}^{\bar{j}}_{b}D^{0}\phi^{i}\right) + h^{ab}\tilde{\mathcal{F}}^{cs0}\partial_{s}k_{bc} - h^{ab}E^{cs}\partial_{s}h_{bc} \nonumber\\
& & h^{ab}\left(k_{de}f^{d}_{bc} -
k_{bd}f^{d}_{ce}\right)A^{c}_{s}\tilde{\mathcal{F}}^{es0} -
h^{ab}h_{de}f^{d}_{bc}A^{c}_{s}E^{es} \ ,\label{rho}
\end{eqnarray}
with the initial value of (\ref{constraint}) is
$\mathcal{C}^a(0)=0$.
Next, we consider the scalar field equation of motions which have
been modified into
\begin{eqnarray}
\frac{\partial \pi^{i}}{\partial t} & = & \partial_{s}\partial^{s}\phi^{i} + \Gamma^{i}_{kl}\left(\partial_{r}\phi^{k}\partial^{r}\phi^{l} - \pi^{k}\pi^{l}\right) + X^{i}_{a}\partial_{r}A^{ar} + A^{ar}\partial_{r}\phi^{k}\nabla_{k}X^{i}_{a} \nonumber\\
& & + g^{i\bar{j}}g_{k\bar{l}}A^{ar}D_{r}\phi^{k}\nabla_{\bar{j}}\bar{X}^{\bar{l}}_{a} -g^{i\bar{j}}\left(\frac{1}{4}h_{ab\bar{j}}\mathcal{F}^{a}_{\mu\nu}\mathcal{F}^{b\mu\nu} -
\frac{1}{4}k_{ab\bar{j}}\mathcal{F}^{a}_{\mu\nu}\tilde{\mathcal{F}}^{b\mu\nu} +
\partial_{\bar{j}}V_{S}\right) \ , \label{scalarEOM}
\end{eqnarray}
where
\begin{eqnarray}
\pi^{i} & = & -D^{0}\phi^{i} \ , \\
\nabla_{k}X^{i}_{a} & = & \partial_{k}X^{i}_{a} +
\Gamma^{i}_{kl}X^{l}_{a} \ ,
\end{eqnarray}
together with its complex conjugate.\\
\indent Now, we rewrite the equations of motion (\ref{gaugeEOM})
and (\ref{scalarEOM}) into the following form
\begin{equation}
\frac{du}{dt} = \mathcal{A}u + J(u) \ ,\label{originaleq}
\end{equation}
where
\begin{equation}
u = \left[\begin{array}{c}
A^{as}\\
E^{as}\\
\phi^{i}\\
\pi^{i}\\
\bar{\phi}^{\bar{j}}\\
\bar{\pi}^{\bar{j}}\end{array}\right],\quad\quad\quad\quad \mathcal{A}u =\left[\begin{array}{c}
E^{as}\\
\partial_{r}\partial^{r}A^{as}-\partial_{r}\partial^{s}A^{ar}\\
\pi^{i}\\
\partial_{r}\partial^{r}\phi^{i}\\
\bar{\pi}^{\bar{j}}\\
\partial_{r}\partial^{r}\bar{\phi}^{\bar{j}}\end{array}\right] \ ,\nonumber
\end{equation}
and
\begin{equation}
J(u) = \left[\begin{array}{c}
 0\\
f_{bc}^{a}\partial_{r}\left(A^{bs}A^{cr}\right) + h^{ab}\left(\mathcal{F}^{esr}\partial_{r}h_{bc} - \tilde{\mathcal{F}}^{esr}\partial_{r}k_{bc}+h_{de}f^{d}_{bc}A^{c}_{r}\mathcal{F}^{esr}\right)+\mathcal{D}_{1}\\ 
0\\ 
\Gamma^{i}_{kl}\left(\partial_{s}\phi^{k}\partial^{s}\phi^{l}-\pi^{k}\pi^{l}\right)
+X_{a}^{i}\partial_{s}A^{as} +
A^{as}\nabla_{k}X^{i}_{a}\partial_{s}\phi^{k} + g^{i\bar{j}}g_{l\bar{k}}A^{a}_{s}\nabla_{\bar{j}}\bar{X}^{\bar{k}}_{a}D^{s}\phi^{l}+\mathcal{D}_{2}\\ 
0\\ 
\Gamma^{\bar{j}}_{\bar{k}\bar{l}}\partial_{s}\bar{\phi}^{\bar{k}}\partial^{s}\bar{\phi}^{\bar{l}}
+\bar{X}_{a}^{\bar{j}}\partial_{s}A^{as} +
A^{as}\nabla_{\bar{k}}\bar{X}^{\bar{j}}_{a}\partial_{s}\bar{\phi}^{\bar{k}}
-g^{i\bar{j}}g_{k\bar{l}}A^{a}_{s}\nabla_{i}X^{k}_{a}\left\{\partial^{s}\bar{\phi}^{\bar{l}}
+A^{bs}\bar{X}^{\bar{l}}_{b}\right\}+\bar{\mathcal{D}}_{2}\end{array}\right] \ . 
\end{equation}
where,
\begin{eqnarray}
\mathcal{D}_{1} & = &
-h^{ab}\left(h_{bci}\pi^{i}+h_{bc\bar{j}}\bar{\pi}^{\bar{j}}\right)E^{cs}
-h^{ab}\left(k_{bci}\pi^{i}+k_{bc\bar{j}}\bar{\pi}^{\bar{j}}\right)\tilde{\mathcal{F}}^{c0s}\nonumber\\
& &  - h^{ab}\left(k_{de}f^{d}_{bc}-k_{bd}f^{d}_{ce}\right)A^{c}_{r}\tilde{\mathcal{F}}^{esr}
- h^{ab}g_{i\bar{j}}\left(X^{i}_{b}D^{s}\bar{\phi}^{\bar{j}}+X^{\bar{j}}_{b}D^{s}\phi^{i}\right) \ ,\\
\mathcal{D}_{2} & = &
g^{i\bar{j}}\frac{h_{ab\bar{j}}}{4}\left(-2E^{a}_{s}E^{bs} - \mathcal{F}^{a}_{rs}\mathcal{F}^{brs}\right)
+ g^{i\bar{j}}\frac{k_{ab\bar{j}}}{2}\epsilon^{srl}E^{a}_{s}\mathcal{F}^{b}_{rl}-g^{i\bar{j}}\partial_{\bar{j}}V \ ,\\
\bar{\mathcal{D}}_{2} & = &
g^{i\bar{j}}\frac{h_{abi}}{4}\left(-2E^{a}_{s}E^{bs} -
\mathcal{F}^{a}_{rs}\mathcal{F}^{brs}\right) +
g^{i\bar{j}}\frac{k_{abi}}{2}\epsilon^{srl}E^{a}_{s}\mathcal{F}^{b}_{rl}-g^{i\bar{j}}\partial_{i}V
\ ,
\end{eqnarray}
together with the constraint equation (\ref{constraint}).\\
\indent To solve the constraint problem, we use the method in
\cite{Eardley} which can be structured as follows. First, we
modify (\ref{gaugeEOM}) by decomposing $E$ field into unique
transverse parts (divergence free) $E_{T}$ and longitudinal parts
(curl free) $E_{L}$
\begin{equation}
E = E_{T} + E_{L} \ , \label{decompostitonEField}
\end{equation}
with
\begin{equation}
\partial_{s}E^{s}_{T} = 0,\qquad \varepsilon_{qrs}\partial^{r}E^{s}_{L} = 0 \
.
\end{equation}
Then, replacing $E_{L}$ with the new fields, $E_{C}$ which equals to $E_{L}$ when the constraint is
satisfied (see lemma \ref{LemmaConstraint}),
\begin{equation}
E^{s}_{L}\rightarrow E^{s}_{C} =
\partial_{s}\left(-\frac{1}{4\pi R}\ast\rho\right) \ ,
\end{equation}
where  $-\frac{1}{4\pi r}\ast\rho$ represents convolution of $\rho$
with the fundamental solution of Poisson equation,
\begin{equation}
-\frac{1}{4\pi R}\ast\rho = -\frac{1}{4\pi} \int_{\mathbb{R}^{3}}\:dx'\left(\frac{\rho(x')}{|x-x'|}\right).
\end{equation}

Let $\mathcal{H}=(H_{2}\times H_{1})^{3}$, where $H_{p}$ represent a Sobolev space of square integrable functions over $\mathbb{R}^{3}$ with their derivative up to order $p$ are also square integrable and let $\|\:.\:\|_{H_{p}}$ represents a Sobolev norm defined as
\begin{equation}
\|u\|_{H_{p}} =\left[\sum _{|\alpha |=0}^{p} \|D^{\alpha}u\|_{L_{2} }^{2} \right]^{\frac{1}{2}}.
\end{equation}
where $\alpha=(\alpha_{1},\alpha_{2},\ldots,\alpha_{m})$ is a multi-index of non-negative integers and $|\alpha| = \alpha_{1} + \alpha_{2} + \ldots + \alpha_{m}$. The $H_{p}$ space is also a Hilbert space. In our analysis, we take the fields $u = (A,E,\phi,\pi ,\bar{\phi},\bar{\pi}) $ to lie in $\mathcal{H}$.

Now we prove the following lemmas which are an extension of lemmas in \cite{Eardley} to incorporate the gauge kinetic functions and show that for fields $u$ lying in ${\mathcal{H}}$, the condition $E_{L}=E_{C}$ implies that $E$ satisfies the constraint equation.
\begin{lemma}
\label{LemmaSource}
Let $\rho$ is defined as in \ref{rho}. If the gauge kinetic function is also lies at least in $H_{1}$, then $E_{C}^{s} = \partial^{s}\left(-\frac{1}{4\pi r}\ast\rho\right) = \frac{1}{4\pi}\left(\frac{\hat{r}}{r^{2}}\ast\rho\right)^{s}$.
\end{lemma}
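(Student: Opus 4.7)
The plan is to reduce the statement to the standard fact that, in $\mathbb{R}^{3}$, $\partial^{s}(1/r)=-\hat{r}^{s}/r^{2}$, so that formally
\[
\partial^{s}\!\left(-\tfrac{1}{4\pi r}\ast\rho\right)=\tfrac{1}{4\pi}\!\left(\tfrac{\hat r^{s}}{r^{2}}\ast\rho\right).
\]
The entire content of the lemma is therefore to show that both convolutions are well-defined as functions (not only as distributions) and that the interchange of derivative and integral is legitimate under the standing hypotheses $u\in\mathcal{H}=(H_{2}\times H_{1})^{3}$ and $f_{ab}\in H_{1}$.

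First I would establish that $\rho\in L^{p}(\mathbb{R}^{3})$ for some $p\in[1,\tfrac{3}{2})$, which by the Hardy--Littlewood--Sobolev inequality is enough for $(1/r)\ast\rho$ and $(\hat{r}/r^{2})\ast\rho$ to be classical $L^{q}$ functions. Inspecting (\ref{rho}), every term is a product of some combination of $A$, $E$, $\phi$, $\pi$ with the gauge-kinetic coefficients $h^{ab}$, $\partial_{s}h_{bc}$, $\partial_{s}k_{bc}$ and the Killing data $X^{i}_{a}$. Using the Sobolev embeddings $H_{2}\hookrightarrow L^{\infty}\cap L^{6}$ and $H_{1}\hookrightarrow L^{6}$ in three dimensions, together with H\"older's inequality, each bilinear or trilinear product lies in $L^{2}\cap L^{6/5}$, which suffices.

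Next I would justify differentiation under the integral. The cleanest route is to regularize the kernel by $1/\sqrt{r^{2}+\varepsilon^{2}}$, which is smooth with integrable gradient; differentiation commutes with convolution for the regularized kernel, and dominated convergence (using the $L^{p}$ bound on $\rho$ from the first step) lets us pass $\varepsilon\to 0$ to identify the limit with $\tfrac{1}{4\pi}(\hat{r}/r^{2})\ast\rho$. Equivalently, one may simply quote the classical Newtonian-potential theorem: for $\rho\in L^{p}(\mathbb{R}^{3})$ with $1\le p<3$, the distributional gradient of $-\tfrac{1}{4\pi r}\ast\rho$ is given by the principal-value integral against $\hat{r}/(4\pi r^{2})$, and because $p<\tfrac{3}{2}$ the singularity at the origin is integrable so the principal value reduces to an ordinary Lebesgue integral.

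The main obstacle is the role of the gauge-kinetic coefficients. The hypothesis $f_{ab}\in H_{1}$ alone does not provide an $L^{\infty}$ bound in three dimensions, so the pointwise products $h^{ab}(\cdots)$ and $h^{ab}\partial_{s}h_{bc}$ in $\rho$ are a priori delicate. I would handle this by treating $h^{ab}$ (the inverse of the positive-definite matrix $\operatorname{Re}f_{ab}$) as a bounded coefficient under the nondegeneracy assumption implicit in the model, so that the H\"older estimate takes the schematic form $L^{\infty}\times L^{6}\times L^{6/5}\hookrightarrow L^{6/5}$. Once this integrability of $\rho$ is in hand, the remainder of the proof is routine manipulation of convolutions.
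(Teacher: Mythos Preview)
Your proposal is correct and follows the same high-level strategy as the paper: first establish enough integrability of $\rho$ via Sobolev embedding and H\"older, then justify the identity $\partial^{s}\bigl(-\tfrac{1}{4\pi r}\ast\rho\bigr)=\tfrac{1}{4\pi}\bigl(\tfrac{\hat r}{r^{2}}\ast\rho\bigr)^{s}$.

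The one genuine methodological difference is in how the derivative is passed through the convolution. The paper does \emph{not} regularize the kernel or invoke the Newtonian-potential theorem; instead it argues by duality. It pairs $E_C$ against an arbitrary rapidly decreasing test vector field $V$, integrates by parts to throw $\partial^{s}$ onto $V_{s}$, and then uses Fubini (justified by the weak-$L_{3/2}$ bound $\bigl|\tfrac{\hat r}{r^{2}}\bigr|\ast|\rho|\in L_{r}$ for $\tfrac32<r<3$) to rewrite the pairing as $\tfrac{1}{4\pi}\langle V_{s},(\tfrac{\hat r}{r^{2}}\ast\rho)^{s}\rangle$. Your regularization argument is slightly more direct and avoids the weak-derivative bookkeeping; the paper's duality route has the advantage that the generalized Young inequality for weak $L_{p}$ spaces does all the analytic work, and no $\varepsilon\to 0$ limit needs to be controlled. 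Both approaches are standard and yield the same conclusion.

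One further remark: the paper simply asserts that all factors appearing in $\rho$ lie in $L_{s}$ for $2\le s\le 6$ and concludes $\rho\in L_{q}$ for $1\le q\le 3$, without singling out the inverse matrix $h^{ab}$. Your explicit observation that $f_{ab}\in H_{1}$ alone does not give an $L^{\infty}$ bound, and that one must appeal to the implicit nondegeneracy of $\operatorname{Re}f_{ab}$ to treat $h^{ab}$ as a bounded coefficient, is a point the paper glosses over; flagging it is a genuine improvement in rigor.
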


\begin{proof}

Since $\frac{1}{r}$ is a weak $L_{3,w}(\mathbb{R}^{3})$ function, then using generalize Young inequality for convolution product \cite{Reed}, we have
\begin{equation}
\left\|\frac{1}{r} \ast \rho \right\|_{L_{p}}\leq C\|\rho\|_{L_{q}}\left\|\frac{1}{r}\right\|_{3,w}\leq C'\|\rho\|_{L_{q}},
\end{equation}
for $\frac{1}{q}=\frac{1}{p}+\frac{2}{3}$. It means $\frac{1}{r}\ast \rho \in L_{p}(\mathbb{R}^{3})$ provided that $\rho \in L_{q}(\mathbb{R}^{3})$. Since all field $(A,E,\phi,\pi ,\bar{\phi},\bar{\pi})$ at least lie in $H_{1}$ and using the assumption that the gauge kinetic function is also lie at least in $H_{1}$, then by Sobolev embedding theorem, they all lie in $L_{s}$ for $2\leq s \leq 6$. Then, $\rho \in L_{q}$ for $1\leq q \leq 3$, thus $\frac{1}{r}\ast \rho \in L_{p}(\mathbb{R}^{3})$ for $p>3$.

Let $V$ be an arbitrary $C^{\infty}$ rapidly decreasing vector field in $\mathbb{R}^{3}$, we have
\begin{eqnarray}
\left\langle V_{s},E_{C}^{s} \right\rangle & = & \frac{1}{4\pi} \left\langle V_{s},-\partial^{s}\left(\frac{1}{r}\ast\rho\right) \right\rangle \nonumber\\
& = & \frac{1}{4\pi} \left\langle \partial^{s}V_{s},\left(\frac{1}{r}\ast\rho\right) \right\rangle \nonumber\\
& = & \frac{1}{4\pi} \left\langle \frac{1}{r}\ast\partial^{s}V_{s},\rho\right\rangle.
\end{eqnarray}
Since $\left|\frac{\hat{r}}{r}\right|$ is a weak $L_{3/2,w}(\mathbb{R}^{3})$ function and $\rho \in L_{q}(\mathbb{R}^{3})$ for $1\leq q \leq 3$, then
\begin{equation}
\left|\frac{\hat{r}}{r}\right|\ast\left|\rho\right| \in L_{r},
\end{equation}
for all $\frac{3}{2}<r<3$. Hence, we have,
\begin{eqnarray}
\frac{1}{4\pi} \left\langle \frac{1}{r}\ast\partial^{s}V_{s},\rho\right\rangle & = & \frac{1}{4\pi}\int\:V_{s}\left[\int \frac{x^{s}-x'^{s}}{\left|x-x'\right|^{3}}\rho(x')dx'\right]dx \nonumber\\
& = & \frac{1}{4\pi}\left\langle V_{s},\left(\frac{\hat{r}}{r^{2}}\ast\rho\right)^{s}\right\rangle.
\end{eqnarray}
And finally we have $E_{C}^{s} = \partial^{s}\left(-\frac{1}{4\pi r}\ast\rho\right) = \frac{1}{4\pi}\left(\frac{\hat{r}}{r^{2}}\ast\rho\right)^{s}$ which complete the proof.

\end{proof}

\begin{lemma}
\label{LemmaEc}
$E_{C}\in L_{p}$ for all $\frac{3}{2}<p<\infty$. In particular $E_{C}\in L_{2}$.
\end{lemma}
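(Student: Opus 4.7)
The plan is to apply the generalized Young inequality (Hardy--Littlewood--Sobolev) to the convolution representation of $E_{C}$ supplied by Lemma \ref{LemmaSource}. The key point is that the kernel $\hat{r}/r^{2}$ lies in the weak Lebesgue space $L_{3/2,w}(\mathbb{R}^{3})$, because the distribution function of $|x|^{-2}$ scales like $\lambda^{-3/2}$. Combined with the $L_{q}$-information on $\rho$ already established inside the proof of Lemma \ref{LemmaSource}, this immediately produces the desired $L_{p}$ bound on $E_{C}$.

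First I would recall that in the previous proof we showed $\rho \in L_{q}(\mathbb{R}^{3})$ for every $1\le q\le 3$: every field in $(A,E,\phi,\pi,\bar{\phi},\bar{\pi})$ lies in $H_{1}$, and by hypothesis the first derivatives of the gauge kinetic functions lie in $H_{1}$ as well, so Sobolev embedding places all of them in $L_{s}$ for $2\le s\le 6$. Since each term of (\ref{rho}) is at worst a product of three such factors, Hölder's inequality yields the full interval $1\le q\le 3$.

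Next I would invoke the generalized Young inequality
\[
\Bigl\|\tfrac{\hat{r}}{r^{2}}\ast\rho\Bigr\|_{L_{p}}
\;\le\; C\,\Bigl\|\tfrac{\hat{r}}{r^{2}}\Bigr\|_{L_{3/2,w}}\|\rho\|_{L_{q}},
\]
valid for $1<q,p<\infty$ with $1+\tfrac{1}{p}=\tfrac{2}{3}+\tfrac{1}{q}$, i.e.\ $q=\tfrac{3p}{p+3}$. As $p$ runs over $(3/2,\infty)$ the dual exponent $q$ runs over $(1,3)$, which is exactly covered by the range in which $\rho$ was shown to live. For the distinguished case $p=2$ one has $q=6/5$, again inside the admissible interval, so $E_{C}\in L_{2}$ is a direct specialisation.

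The main (routine but unavoidable) obstacle will be verifying term by term that $\rho$ as written in (\ref{rho}) really lies in $L_{q}$ for the full range $1\le q\le 3$: one must estimate products involving $\partial_{s}h_{bc}$, $\partial_{s}k_{bc}$, the gauge connection $A$, the electric field $E$, the Killing contractions $g_{i\bar{j}}X^{i}_{b}$, and the covariant derivatives $D^{0}\phi^{i}$, all via Sobolev embedding and Hölder. Once each summand is placed in the appropriate $L_{q}$, the convolution estimate above closes the argument and gives both statements of the lemma simultaneously.
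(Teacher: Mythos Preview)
Your proposal is correct and follows essentially the same route as the paper: invoke the convolution representation $E_{C}=\tfrac{1}{4\pi}\bigl(\tfrac{\hat{r}}{r^{2}}\ast\rho\bigr)$ from Lemma~\ref{LemmaSource}, note that $\hat{r}/r^{2}\in L_{3/2,w}(\mathbb{R}^{3})$, and apply the generalized Young inequality with $\tfrac{1}{q}=\tfrac{1}{p}+\tfrac{1}{3}$ to the already-established fact that $\rho\in L_{q}$ for $1\le q\le 3$. The paper's proof is in fact shorter than what you outline, since the term-by-term verification of $\rho\in L_{q}$ was already carried out inside the proof of Lemma~\ref{LemmaSource} and need not be repeated here.
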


\begin{proof}

From the proof of the previous lemma, we already have $\rho \in L_{q}(\mathbb{R}^{3})$ for $1\leq q \leq 3$. Then using the representation of $E_{C}$ in previous lemma,
\begin{equation}
\|E_{C}\|_{L_{p}} = \left\|\frac{\hat{r}}{r^{2}}\ast\rho\right\|_{L_{p}} \leq C \|\rho\|_{L_{q}}\left\|\frac{\hat{r}}{r^{2}}\right\|_{3/2,w} \leq C'\|\rho\|_{L_{q}},
\end{equation}
for $1<p,q<\infty$ and $\frac{1}{q}=\frac{1}{p}+\frac{1}{3}$, then $E_{C} \in L_{s}$ for all $\frac{3}{2}<s<\infty$.

In particular, $E_{C} \in L_{2}$ and the following inequality holds,
\begin{equation}
\|E_{C}\|_{L_{2}}\leq C\|\rho\|_{L_{6/5}}. \label{lema1}
\end{equation}

\end{proof}

\begin{lemma}
\label{LemmaConstraint}
The condition $E_{C}=E_{L}$ is equivalent to $\partial_{s}E^{s}=4\pi\rho$.
\end{lemma}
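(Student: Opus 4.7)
The plan is to split the statement into its two directions and handle both via the Helmholtz structure of $E = E_T + E_L$ together with a direct distributional computation of $\partial_s E_C^s$.

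First I would observe that $\partial_s E_T^s = 0$ by construction, so $\partial_s E^s = \partial_s E_L^s$ and the constraint $\partial_s E^s = 4\pi\rho$ is equivalent to $\partial_s E_L^s = 4\pi\rho$. The question therefore reduces to comparing the divergences of $E_L$ and $E_C$.

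Next I would compute $\partial_s E_C^s$ directly. From the representation in Lemma \ref{LemmaSource}, $E_C$ is (up to the chosen normalization) the Coulomb field generated by $\rho$, so the distributional identity $\partial_s(\hat{r}^s/r^2) = 4\pi\delta$ combined with Fubini and the $L_p$ regularity of $\rho$ and $E_C$ provided by Lemmas \ref{LemmaSource} and \ref{LemmaEc} yields $\partial_s E_C^s = 4\pi\rho$ in the sense of distributions on $\mathbb{R}^3$. This immediately gives the $(\Rightarrow)$ direction: if $E_C = E_L$, then $\partial_s E^s = \partial_s E_L^s = \partial_s E_C^s = 4\pi\rho$.

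For the converse, assume $\partial_s E^s = 4\pi\rho$, hence $\partial_s E_L^s = 4\pi\rho = \partial_s E_C^s$. Both $E_L$ (by the defining curl-free property of the longitudinal component) and $E_C$ (as a gradient) are curl-free, so $E_L - E_C$ is simultaneously curl-free and divergence-free. Writing $E_L - E_C = \nabla\psi$, each component is harmonic on $\mathbb{R}^3$; the $L_2$ bound on $E_C$ from Lemma \ref{LemmaEc} together with $E_L \in L_2(\mathbb{R}^3)$ coming from $u \in \mathcal{H}$ forces $\nabla\psi \in L_2(\mathbb{R}^3)$, and the standard Liouville-type uniqueness for harmonic functions with $L_2$ gradient on $\mathbb{R}^3$ forces $\nabla\psi \equiv 0$, so $E_L = E_C$.

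The main obstacle is the converse direction: both the distributional computation of $\partial_s E_C^s$ and the Liouville-type uniqueness step must be set up in function spaces compatible with the Helmholtz decomposition and with the regularity delivered by the preceding lemmas. In particular, one must verify that $E_L$ indeed inherits the $L_2$ integrability needed to rule out nontrivial harmonic contributions in $\nabla\psi$; the routine parts (applying Fubini and differentiation under the convolution) go through by the $L_p$ bounds already established.
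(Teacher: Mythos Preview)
Your proposal is correct and follows essentially the same structure as the paper: establish $\partial_s E_C^s = 4\pi\rho$ distributionally, read off the forward direction, and for the converse invoke uniqueness of the $L_2$ longitudinal field with prescribed divergence. The only cosmetic difference lies in this last step, where the paper argues via the Fourier transform (obtaining $\hat{E}_L^s = -i\,k^s|k|^{-2}(4\pi\hat{\rho}) = \hat{E}_C^s$ directly from the Fourier-side definition of the longitudinal projector) while you phrase the same uniqueness as a Liouville-type vanishing for the harmonic $L_2$ vector field $E_L - E_C$; the two arguments are equivalent.
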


\begin{proof}

Let $V$ be an arbitrary $C^{\infty}$ rapidly decreasing function in $\mathbb{R}^{3}$, we have
\begin{eqnarray}
\left\langle V,\partial_{s}E_{C}^{s} \right\rangle & = & \frac{1}{4\pi}\int\:V(x)\partial_{s}\left[\int \frac{x^{s}-x'^{s}}{\left|x-x'\right|^{3}}\rho(x')dx'\right]dx\nonumber\\
& = & -\frac{1}{4\pi} \int\:\rho(x')\left[\int \frac{x^{s}-x'^{s}}{\left|x-x'\right|^{3}}\partial_{s}V\: dx\right]dx'\nonumber\\
& = & \frac{1}{4\pi} \int\:\rho(x')\left[\int V(x)\partial_{s}\left(\frac{x^{s}-x'^{s}}{\left|x-x'\right|^{3}}\right)dx\right]dx'\nonumber\\
& = & \int \rho(x') V(x') dx'\nonumber\\
& = & \left\langle V,\rho \right\rangle,
\end{eqnarray}
where we used integrating by parts and exchange the order of integration. Thus, $E_{C}$ satisfies $\partial_{s}E_{C}^{s}=4\pi\rho$
as distribution.

Since $E_{C} \in L^{2}$, we can define it's Fourier transform and decomposing as
\begin{equation}
\hat{E}_{C} = \hat{E}_{C}^{T} + \hat{E}_{C}^{L},
\end{equation}
with
\begin{eqnarray}
\left(\hat{E}_{C}^{T}\right)^{s} & = & \left(\delta^{sr}-\frac{k^{s}k^{r}}{|k|^{2}}\right)\left(\hat{E}_{C}\right)_{r}\\
\left(\hat{E}_{C}^{L}\right)^{s} & = & \frac{k^{s}k^{r}}{|k|^{2}}\left(\hat{E}_{C}\right)_{r}.
\end{eqnarray}
Furthermore, $E_{C}$ is a gradient, then $E_{C}$ has vanishing curl, then $\hat{k}\times\hat{E_{C}}=0$ which implies that $E_{C}$ has only a longitudinal component, thus
\begin{equation}
\left(E_{C}^{L}\right)^{s} =  \frac{k^{s}k^{r}}{|k|^{2}}\left(\hat{E}_{C}\right)_{r} = \left(\hat{E}_{C}\right)^{s},
\end{equation}
and because $E_{C}$ satisfies $\partial_{s}E_{C}^{s}=4\pi\rho$, taking a Fourier transform, we have
\begin{equation}
\hat{E}_{C}^{s}=-i\frac{k^{s}}{|k|^{2}}\left(4\pi\hat{\rho}\right).
\end{equation}

Now suppose that the field $u$ satisfies the constraint equation,
\begin{equation}
\partial_{s}E^{s}=\partial_{s}E_{L}^{s}=4\pi\rho,
\end{equation}
where we used the decomposition in (\ref{decompostitonEField}). Taking a Fourier transform of the constraint equation, we get
\begin{eqnarray}
\hat{E}_{L}^{s} & = & -i\frac{k^{s}}{|k|^{2}}\left(4\pi\hat{\rho}\right)\nonumber\\
& = & \hat{E}_{C}^{s}.
\end{eqnarray}
Thus any solution of the constraint has $E_{L} = E_{C}$. Conversely, if $E_{L} = E_{C}$, the constraint is satisfied.

\end{proof}

With the modification, we can rewrite the equation of motions as
follows
\begin{equation}
\frac{du}{dt} = \mathcal{A}u + J(u),\label{eom}
\end{equation}
where
\begin{equation}
u = \left[\begin{array}{c}
A^{as}\\
E^{as}\\
\phi^{i}\\
\pi^{i}\\
\bar{\phi}^{\bar{j}}\\
\bar{\pi}^{\bar{j}}\end{array}\right],\quad\quad\quad\quad \mathcal{A}u =\left[\begin{array}{c}
E_{T}^{as}\\
\partial_{r}\partial^{r}A^{as}-\partial_{r}\partial^{s}A^{ar}\\
\pi^{i}\\
\partial_{r}\partial^{r}\phi^{i}\\
\bar{\pi}^{\bar{j}}\\
\partial_{r}\partial^{r}\bar{\phi}^{\bar{j}}\end{array}\right],\nonumber
\end{equation}
and
\begin{equation}
J(u) = \left[\begin{array}{c}
 \partial^{s}\{-\frac{1}{4\pi r}\ast\rho\}\\
f_{bc}^{a}\partial_{r}\left(A^{bs}A^{cr}\right) + h^{ab}\left(\mathcal{F}^{esr}\partial_{r}h_{bc} - \tilde{\mathcal{F}}^{esr}\partial_{r}k_{bc}+h_{de}f^{d}_{bc}A^{c}_{r}\mathcal{F}^{esr}\right)+\mathcal{D}_{1}\\ 
0\\ 
\Gamma^{i}_{kl}\left(\partial_{s}\phi^{k}\partial^{s}\phi^{l}-\pi^{k}\pi^{l}\right)
+X_{a}^{i}\partial_{s}A^{as} +
A^{as}\nabla_{k}X^{i}_{a}\partial_{s}\phi^{k} + g^{i\bar{j}}g_{l\bar{k}}A^{a}_{s}\nabla_{\bar{j}}\bar{X}^{\bar{k}}_{a}D^{s}\phi^{l}+\mathcal{D}_{2}\\ 
0\\ 
\Gamma^{\bar{j}}_{\bar{k}\bar{l}}\partial_{s}\bar{\phi}^{\bar{k}}\partial^{s}\bar{\phi}^{\bar{l}}
+\bar{X}_{a}^{\bar{j}}\partial_{s}A^{as} +
A^{as}\nabla_{\bar{k}}\bar{X}^{\bar{j}}_{a}\partial_{s}\bar{\phi}^{\bar{k}}
-g^{i\bar{j}}g_{k\bar{l}}A^{a}_{s}\nabla_{i}X^{k}_{a}\left\{\partial^{s}\bar{\phi}^{\bar{l}}
+A^{bs}\bar{X}^{\bar{l}}_{b}\right\}+\bar{\mathcal{D}}_{2}\end{array}\right]. 
\end{equation}

\section{Scalar Internal Manifold}
\label{SIM}

This section is assigned for the discussion of the internal scalar
manifold. In particular, we consider the case of the K\"ahler
potential to be bounded to a function and derive an estimates for
K\"ahler potential and Christoffel symbol. Our estimates derived
in this section is important in our analysis for proving the local
existence of (\ref{eom}).

As mention in section \ref{GCCVMs}, in four dimensions, the $N=1$ supersymmetry theory demands that the scalar field $(\phi,\bar{\phi})$ span a K\"ahler manifold with K\"ahler potential $K \equiv K(\phi, \bar{\phi})$. In this paper, we consider the case where the K\"ahler potential bounded above by $U(n_{c})$ symmetric K\"ahler potential and satisfies several conditions,
\begin{eqnarray}
K \leq \Phi(|\phi|) \: ,\label{KahlerPotential}\\
\left|\Gamma\right| \leq |\tilde{\Gamma}| \: ,\label{ChristoffelSymbol}
\end{eqnarray}
where $|\phi| =
\left(\delta_{i\bar{j}}\phi^{i}\bar{\phi}^{\bar{j}}\right)^{\frac{1}{2}}$ and $\tilde{\Gamma}$ is the Christoffel symbol of $\tilde{g}$.

We prove a lemma about estimates of K\"ahler potential and Christoffel symbol,
\begin{lemma}
\label{LemmaSIM}
Let $\mathcal{M}$ be a K\"ahler manifold with Kahler potential $K=K(\phi,\bar{\phi})$. If $\mathcal{M}$ satisfies (\ref{KahlerPotential}), (\ref{ChristoffelSymbol}) and
\begin{equation}
\left|\frac{F'}{2|\phi|}\right|  \leq  \epsilon \: ,\label{ConditionKahler}
\end{equation}
where $F(|\phi|) =  \frac{1}{4|\phi|^{2}}\left(\Phi''-\frac{\Phi'}{|\phi|}\right)$ with $\Phi' = \partial \Phi/\partial |\phi|$ and $\epsilon$ is a non negative constant, then we have the following estimates
\begin{eqnarray}
\left|K\right| & \leq  & \frac{\epsilon}{6} \left|\phi\right|^{6} +
\frac{C_{1}}{2}\left|\phi\right|^{4} + C_{2}\left|\phi\right|^{2} +
C_{3} \: ,\\
\left|\Gamma\right| & \leq & 2\epsilon |\phi|^{3} + C_{1}|\phi| \: .
\end{eqnarray}
\end{lemma}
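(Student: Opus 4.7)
The strategy is to exploit the $U(n_c)$ symmetry of the comparison potential $\Phi$ so that everything reduces to one-variable computations in $r\equiv|\phi|$, and then to integrate the pointwise hypothesis (\ref{ConditionKahler}) three times. First I would make the comparison metric explicit: writing $\bar\phi_i\equiv\delta_{i\bar j}\bar\phi^{\bar j}$ and differentiating $\Phi(r)$ twice gives
\[
\tilde g_{i\bar j} \;=\; \frac{\Phi'(r)}{2r}\,\delta_{i\bar j} \;+\; F(r)\,\bar\phi_i\phi_{\bar j}\,,
\]
which is essentially the definition of $F$. Inverting this rank-one perturbation of a multiple of the identity by Sherman--Morrison and differentiating once more produces a closed formula for $\tilde\Gamma^i_{kl}=\tilde g^{i\bar m}\partial_k\tilde g_{l\bar m}$ as a combination of the three tensor structures $\delta^i_k\bar\phi_l$, $\delta^i_l\bar\phi_k$, and $\phi^i\bar\phi_k\bar\phi_l$ with scalar coefficients built algebraically out of $F$, $F'/r$, and $\Phi'/r$.

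Next I would integrate the hypothesis. The bound $|F'(r)/(2r)|\le\epsilon$ gives $|F'(r)|\le 2\epsilon r$ and, integrating from $r=0$, $|F(r)|\le\epsilon r^2+|F(0)|$. The defining relation for $F$ can be rewritten as the linear ODE $(\Phi'/r)'=4rF$, which on integration yields $\Phi'(r)/r=(\Phi'/r)|_{r=0}+\int_0^r 4sF(s)\,ds$; substituting the previous bound under the integral gives $|\Phi'(r)|\le Ar+2|F(0)|r^3+\epsilon r^5$ for some constant $A$. A final integration from $0$ to $r$, together with the comparison $K\le\Phi(|\phi|)$ from (\ref{KahlerPotential}) (and the matching lower bound coming from the standard positivity normalization of a K\"ahler potential), produces the stated estimate on $|K|$ with $C_1=|F(0)|$, $C_2=A/2$, and $C_3$ absorbing $\Phi(0)$.

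For the Christoffel estimate I would substitute the bounds $|F|\le\epsilon r^2+C_1$ and $|F'|\le 2\epsilon r$ into the explicit formula for $\tilde\Gamma$ obtained in the first step. Each tensor piece carries either one or three unpaired factors of $\bar\phi$, so term by term the bound has the shape $(F/\alpha)\cdot r$ plus $(F'/r)(\alpha+Fr^2)^{-1}\cdot r^3$ plus an $F^2$-contribution of the same order, where $\alpha\equiv\Phi'/(2r)$. Combining these with the bounds on $F$ and $F'$, the leading $r^3$ coefficients add up to exactly $2\epsilon$ while the lower-order pieces reorganize into $C_1 r$; hypothesis (\ref{ChristoffelSymbol}) then transfers the estimate to $\Gamma$ itself.

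The main obstacle I anticipate is precisely this last combination. The Sherman--Morrison inverse introduces denominators $\alpha(\alpha+Fr^2)$ that must be shown not to spoil polynomial growth (which tacitly requires $\tilde g$ to be uniformly non-degenerate, a condition presumably absorbed into the normalization of $\Phi$), and the $F'$ and $F^2$ contributions to the coefficient of $\phi^i\bar\phi_k\bar\phi_l$ must combine so that the leading constant comes out as exactly $2\epsilon$ rather than a generic multiple of $\epsilon$. Once those cancellations are in place the $|K|$ bound reduces to three successive one-dimensional integrations, which are routine.
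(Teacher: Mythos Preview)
Your plan is essentially the paper's own argument: write the $U(n_c)$-symmetric comparison metric in terms of $\Phi'/(2r)$ and $F$, integrate the hypothesis $|F'/(2r)|\le\epsilon$ successively to bound $|F|$, $|\Phi'/r|$, and $|\Phi|$ (the paper gets exactly your constants $C_1=|F(0)|$, $C_2=\big|\tfrac{\Phi'}{2|\phi|}(0)\big|$, $C_3=|\Phi(0)|$), and then feed these into the Christoffel formula, finally invoking (\ref{KahlerPotential}) and (\ref{ChristoffelSymbol}) to transfer the bounds to $K$ and $\Gamma$. The only organizational difference is that, rather than computing $\tilde\Gamma^i_{kl}$ via Sherman--Morrison as you propose, the paper first observes the identity
\[
|\tilde\Gamma|^2 \;=\; \tilde g^{l\bar l}\tilde g^{j\bar j}\tilde g^{k\bar k}\,\partial_j\tilde g_{k\bar l}\,\partial_{\bar j}\tilde g_{l\bar k}\;=\;|\partial\tilde g|^2
\]
and then reads off the bound from the explicit expression for $\partial_j\tilde g_{k\bar l}$, whose only scalar coefficients are $F$ and $F'/(2|\phi|)$; this slightly streamlines the combinatorics. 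As for the obstacle you flag---the inverse-metric denominators---the paper's proof does not spell out that step either: it passes directly from the formula for $\partial_j\tilde g_{k\bar l}$ to the bound $2\epsilon|\phi|^3+C_1|\phi|$ without displaying the three $\tilde g^{-1}$ contractions, so the uniform nondegeneracy you correctly identify as needed is taken as implicit in the normalization of $\Phi$.
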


\begin{proof}

Let $\tilde{\mathcal{M}}$ be a K\"ahler manifold generated by $\Phi$. We can write the metric $\tilde{g}_{i\bar{j}} = \partial_{i}\partial_{\bar{j}}\Phi$ as
\begin{equation}
\tilde{g}_{i\bar{j}} = \frac{\Phi'}{2|\phi|}\delta_{i\bar{j}} +
\frac{1}{4|\phi|^{2}}\left(\Phi''-\frac{\Phi'}{|\phi|}\right)\delta_{k\bar{j}}\delta_{i\bar{k}}
\phi^{k}\bar{\phi}^{\bar{k}} \: ,
\end{equation}
where $\Phi' = \partial \Phi/\partial |\phi|$. The inverse of the metric can written as,
\begin{equation}
\tilde{g}^{i\bar{j}} = \frac{2|\phi|}{\Phi'}\delta^{i\bar{j}} -
\frac{2}{\Phi'|\phi|}\left(\frac{\Phi''-\frac{\Phi'}{|\phi|}}{\Phi''+\frac{\Phi'}{|\phi|}}\right)
\phi^{i}\bar{\phi}^{\bar{j}} \: .
\end{equation}

The norm of the Christoffell symbol is
\begin{eqnarray}
|\tilde{\Gamma}| & = &
\left(\tilde{g}^{j\bar{j}}\tilde{g}^{k\bar{k}}\tilde{g}_{i\bar{i}}\tilde{\Gamma}^{i}_{jk}
\bar{\tilde{\Gamma}}^{\bar{i}}_{\bar{j}\bar{k}}\right)^{\frac{1}{2}}\nonumber\\
& = & \left(\tilde{g}^{j\bar{j}}\tilde{g}^{k\bar{k}}\tilde{g}_{i\bar{i}}
\tilde{g}^{i\bar{l}}\partial_{j}\tilde{g}_{k\bar{l}} \: \tilde{g}^{l\bar{i}}\:\partial_{\bar{j}}\tilde{g}_{l\bar{k}}\right)^{\frac{1}{2}}\nonumber\\
& = & \left(\tilde{g}^{l\bar{l}}\tilde{g}^{j\bar{j}}\tilde{g}^{k\bar{k}}\partial_{j}\tilde{g}_{k\bar{l}}\:\partial_{\bar{j}}\tilde{g}_{l\bar{k}}\right)^{\frac{1}{2}}\nonumber\\
& = & |\partial \tilde{g}| \: ,
\end{eqnarray}
and the first derivative of the metric is
\begin{equation}
\partial_{j}\tilde{g}_{k\bar{l}} =
F
\left(\delta_{k\bar{l}}\delta_{j\bar{i}}+\delta_{k\bar{i}}\delta_{j\bar{l}}\right)\bar{\phi}^{\bar{i}}
+ \frac{F'}{2|\phi|}
\delta_{k\bar{i}}\delta_{i\bar{l}}\delta_{j\bar{m}}\phi^{i}\bar{\phi}^{\bar{i}}\bar{\phi}^{\bar{m}} \: ,
\end{equation}
where
\begin{eqnarray}
F(|\phi|)& = & \frac{1}{4|\phi|^{2}}\left(\Phi''-\frac{\Phi'}{|\phi|}\right) \: ,\nonumber\\
\frac{F'}{2|\phi|} & = &
\frac{1}{8|\phi|^{3}}\left(\Phi'''-\frac{3\Phi''}{|\phi|}+\frac{3\Phi'}{|\phi|^{2}}\right) \: .
\end{eqnarray}

If condition (\ref{ConditionKahler}) is satisfied, then using the following inequality for integral,
\begin{equation}
\left|\int f(x) dx \right| \leq \int \left|f(x)\right| dx \: ,
\end{equation}
we have the following estimates,
\begin{eqnarray}
\left|F\right| & \leq  & \epsilon \left|\phi\right|^{2}+C_{1},\nonumber\\
\left|\Phi\right| & \leq  & \frac{\epsilon}{6} \left|\phi\right|^{6} +
\frac{C_{1}}{2}\left|\phi\right|^{4} + C_{2}\left|\phi\right|^{2} +
C_{3} \: ,
\end{eqnarray}
where $C_{1} = |F(0)|$, $C_{2} =
\left|\frac{\Phi'}{2|\phi|}(0)\right|$ and $C_{3} = |\Phi(0)|$.
Then we have the norm of the Christoffel symbol satisfies
\begin{equation}
\left|\tilde{\Gamma}\right| \leq 2\epsilon |\phi|^{3} + C_{1}|\phi| \: .
\end{equation}
Hence, by our assumption in \ref{KahlerPotential} and \ref{ChristoffelSymbol}, we have
\begin{eqnarray}
\left|\Phi\right| & \leq  & \frac{\epsilon}{6} \left|\phi\right|^{6} +
\frac{C_{1}}{2}\left|\phi\right|^{4} + C_{2}\left|\phi\right|^{2} +
C_{3} \: ,\\
\left|\Gamma\right| & \leq & 2\epsilon |\phi|^{3} + C_{1}|\phi| \: .
\end{eqnarray}
This complete the proof.
\end{proof}

Our assumption in (\ref{ConditionKahler}) is satisfied for several examples of K\"ahler manifold, for examples are $\mathbb{C}^{n}$ and $\mathbb{C}P^{n}$ which are widely used in the theory. For $\mathbb{C}^{n}$, the K\"ahler potential is given by $|\phi|^{2}$, then clearly $F$ is vanish, hence $\frac{F'}{2|\phi|}$ is bounded by 0. In case of $\mathbb{C}P^{n}$, the K\"ahler potential (using standard Fubini-Study metric) is given by
\begin{equation}
\Phi_{\mathbb{C}P^{n}}(|\phi|) = \ln(1+|\phi|^{2}) \: .
\end{equation}
Then we have,
\begin{equation}
F = -\frac{1}{\left(1+|\phi|^{2}\right)^{2}} \: ,
\end{equation}
and
\begin{equation}
\left|\frac{F'}{2|\phi|}\right| = \frac{2}{\left(1+|\phi|^{2}\right)^{3}} \: ,
\end{equation}
which is bounded above by 2.

\section{Local Existence}

In this section, we will prove the local existence of the evolution
equation (\ref{eom}) using Segal's theorem. Furthermore, we shall
show that solutions of (\ref{eom})  are the solutions of the
original equations, namely (\ref{gaugeEOM}) and
(\ref{scalarEOM}).\\
\indent In section \ref{FEMs}, we have derived the equation
of motions for $u$ and have taken the field $u$ lie in
$\mathcal{H}=(H_{2}\times H_{1})^{3}$. Let us consider the
linear part of the evolution equation (\ref{eom}),
\begin{equation}
\frac{du}{dt} =\mathcal{A}u \: . \label{LinearizedEquations1}
\end{equation}
By decomposing the $A$ and $E$ fields into transverse and longitudinal components, we can write (\ref{LinearizedEquations1}) as follow,
\begin{eqnarray}
\frac{d}{dt} \left[\begin{array}{c}
A^{as}_{T}\\
E^{as}_{T}\\ \end{array}\right]  = \begin{bmatrix}
0 & I \\
\triangle & 0
\end{bmatrix} \left[\begin{array}{c}
A^{as}_{T}\\
E^{as}_{T}\\ \end{array}\right]& \: , &\qquad
\frac{d}{dt} \left[\begin{array}{c}
A^{as}_{L}\\
E^{as}_{L}\\ \end{array}\right]  =  0 \: ,\\ 
\frac{d}{dt} \left[\begin{array}{c}
\phi^{i}\\
\pi^{i}\\ \end{array}\right]  =  \begin{bmatrix}
0 & I \\
\triangle & 0
\end{bmatrix} \left[\begin{array}{c}
\phi^{i}\\
\pi^{i}\\ \end{array}\right]& \: , & \qquad
\frac{d}{dt} \left[\begin{array}{c}
\bar{\phi}^{\bar{j}}\\
\bar{\pi}^{\bar{j}}\\ \end{array}\right]  =  \begin{bmatrix}
0 & I \\
\triangle & 0
\end{bmatrix} \left[\begin{array}{c}
\bar{\phi}^{\bar{j}}\\
\bar{\pi}^{\bar{j}}\\ \end{array}\right] \: . \label{LinearizedEquations2}
\end{eqnarray}
Each pair of fields $(A_{T},E_{T})$, $(\phi,\pi)$,and $(\bar{\phi},\bar{\pi})$ satisfies the linear wave equation
and $(A_{L},E_{L})$ is a constant of the linearized equation. Thus, the linear operator $\mathcal{A}$ generates
a one-parameter semigroup on $\mathcal{H}$ and for any initial value $u_{0}=u(t_{0}) \in \mathcal{H}$,
 the linearized equation admits a classical solution which can be written as \footnote{for details review, see \cite{Segal1}, \cite{Zheng}.},
\begin{equation}
u(t) = e^{\mathcal{A}(t-t_{0})}u_{0} \: .
\end{equation}
Then, it follows that the solution of linearized equation is globally defined on $\mathcal{H}$.

Following the result above, by writing the evolution equation (\ref{eom}) as an integral equation,
\begin{equation}
u(t)=e^{\mathcal{A}(t-t_{0})}u_{0} +\int _{t_{0} }^{t} \, ds\, e^{\mathcal{A}(s-t_{0})}J(u(s)) \: , \label{IntegralEq}
\end{equation}
the local existence of solution of the equation is established by showing that the nonlinear operator $J$ satisfies Lipshitz condition,
\begin{equation}
\|J(u')-J(u)\| \leq C\left(\|u'\|,\|u\|\right)\|u'-u\| \: ,
\end{equation}
for all $u',u \in \mathcal{H}$. The norm $\|\:.\:\|$ is designed for $\mathcal{H}$ norm and $C(,)$ is some monotonically increasing,
finite function of the norm indicated. Then, for any initial data $u_{0}=u(t_{0})\in D_{\mathcal{A}}$,
where $D_{\mathcal{A}}$ is a domain of linear operator $\mathcal{A}$,
the evolution equation (\ref{eom}) admits a unique classical solution on some
interval $(T_{1},T_{2})$ containing $t_{0}$ either $(T_{1},T_{2}) = (-\infty,\infty)$ or $\|u(t)\|\rightarrow \infty$ as $t\rightarrow T_{1}$ or $T_{2}$.

Let us write the components of the non-linear operator $J$ as $J=(J_{1} ,J_{2} ,J_{3} ,J_{4} ,J_{5} ,J_{6} )$.
The proof that $J$ satisfies a Lipshitz condition is facilitated by a Schauder ring property for Sobolev space
over ${\mathbb R}^{3} $ and the Sobolev inequality over $\mathbb{R}^{3}$,
\begin{equation}
\|\partial ^{r} u\|_{L_{p}} \leq C\|\partial^{s} u\|_{L_{m}}^{\theta} \, \|u\|_{L_{q}}^{1-\theta} \: ,
\end{equation}
for real numbers $q,m$ with $1\le q,m\le \infty $ and $r,s$ are integers where $0\le r<s$ which satisfy
\begin{equation}
\frac{1}{p} =\frac{r}{3} +\theta\left(\frac{1}{m} -\frac{s}{3} \right)+(1-\theta)\frac{1}{q} \: ,
\end{equation}
with $r/s\le \theta\le 1$ and $p$ is non negative, and a constant
$C$ depends only on $m,j,q,r$ and $\theta$. For further discussion
on Sobolev inequality, see \cite{Adams}.\\
\indent Now, since the theory has scalar fields dependent gauge
couplings, we have to make an assumption for the gauge kinetic
function in order to prove that $E_{C}$ is a mapping from
$\mathcal{H}$ to $H_{2}$ and locally Lipshitz continuous.

\begin{lemma}\label{LemmaEC}
Let the fields $(A,E,\phi,\pi ,\bar{\phi},\bar{\pi}) \in {\mathcal{H}}$. If the first derivative of the real part of the gauge kinetic function is at most a linear growth,
\begin{equation}
\|\partial_{i}h_{ab}\|_{H_{1}} = \|\partial_{\bar{j}}h_{ab}\|_{H_{1}} \leq C \|\phi\|_{H_{2}} \: , \label{GaugeKineticCondition}
\end{equation}
then $E_{C}$ lies in $H_{2}$. Furthermore, if $\partial_{s}h_{ab}$ is also a locally Lipshitz function, then $E_{C}$ is a locally Lipshitz function.
\end{lemma}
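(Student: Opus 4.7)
The strategy is to bootstrap the $L_p$ estimate of $E_C$ from Lemma \ref{LemmaEc} up to an $H_2$ estimate via Calder\'on--Zygmund regularity, and then extract $H_1$ control of the source $\rho$ from the hypotheses on $h_{ab}$ and the Sobolev regularity of the fields.

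First, Lemma \ref{LemmaSource} yields $E_C^s = \partial^s\!\left(-\tfrac{1}{4\pi r}\ast\rho\right)$, so $\partial_r E_C^s$ and $\partial_r\partial_t E_C^s$ are second-order Riesz-type operators applied to $\rho$ and $\partial\rho$ respectively. By the Calder\'on--Zygmund theorem such operators are bounded on $L_p(\mathbb{R}^3)$ for $1<p<\infty$; combined with the $L_2$ bound of Lemma \ref{LemmaEc} this reduces the claim $E_C \in H_2$ to showing $\rho \in H_1$.

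Next, I would inspect (\ref{rho}) term by term. Each summand is a product of (i) a function of $\phi$ built from $h^{ab}$, $g_{i\bar{j}}$, $X^i_a$, $k_{ab}$ or their first derivatives; (ii) at most one field from $\{E,\pi,\bar{\pi}\}$; and (iii) possibly components of $A$ or of the curvature $\mathcal{F}$. Since $H_2 \hookrightarrow L_\infty$ in $\mathbb{R}^3$, the factors of type (i) are pointwise bounded, and the Schauder ring property together with the stated Sobolev inequality and the hypothesis (\ref{GaugeKineticCondition}) (as well as the analogous control on $g_{i\bar{j}}$, $X^i_a$, $\partial_i k_{ab}$ coming from smoothness plus $\phi \in L_\infty$ and the estimates of Lemma \ref{LemmaSIM}) yield a polynomial bound $\|\rho\|_{H_1} \leq C(\|u\|_{\mathcal{H}})$, establishing $E_C \in H_2$.

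For the Lipshitz claim, linearity of convolution and the same Calder\'on--Zygmund bound give
\[
\|E_C(u') - E_C(u)\|_{H_2} \leq C\,\|\rho(u') - \rho(u)\|_{H_1},
\]
so I would decompose $\rho(u') - \rho(u)$ into a telescoping sum of one-factor differences. Field-linear factors contribute Lipshitz differences directly; smooth functions of $\phi$ produce differences controlled in $L_\infty$ by $\|\phi' - \phi\|_{H_2}$ through the mean-value theorem; differences involving $\partial_s h_{bc}$ are handled by the hypothesized local Lipshitz property of $\partial_s h_{ab}$. Schauder ring algebra then packages these elementwise differences into a local Lipshitz bound on $\rho \colon \mathcal{H} \to H_1$, and hence on $E_C \colon \mathcal{H} \to H_2$.

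The principal obstacle is the $\partial\rho \in L_2$ step: expanding, for example, $\partial_r\bigl(h^{ab} E^{cs}\,\partial_s h_{bc}\bigr)$ produces the term $\partial_r\partial_i h_{bc}\,\partial_s \phi^i$, which is not immediately controlled by (\ref{GaugeKineticCondition}) alone. Rewriting this as a spatial derivative of $\partial_i h_{bc}$ and invoking the $H_1$ hypothesis on $\partial_i h_{ab}$, while carefully tracking which factor in each product borrows the top regularity from $H_2$ and which settles for $H_1$, is the main technical nuisance; each such term nevertheless succumbs to the three-dimensional Sobolev embedding chain and the Banach-algebra structure of $H_2$.
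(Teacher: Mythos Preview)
Your proposal is correct and mirrors the paper's argument: both reduce the $H_2$ bound on $E_C$ to controlling $\rho$ in $H_1$ via elliptic regularity for the Newtonian potential, then estimate $\rho$ (and, for the Lipshitz part, $\rho(u')-\rho(u)$) term by term using the Schauder ring property, Sobolev embedding, and the linear-growth hypothesis on $\partial_i h_{ab}$. The only cosmetic difference is that the paper carries out the elliptic step by an explicit Fourier/Plancherel identity $\|D^2 E_C\|_{L_2}=4\pi\|\partial_s\rho\|_{L_2}$ rather than citing Calder\'on--Zygmund, which in $L_2$ is the same computation.
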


\begin{proof}
Define a norm for a gauge indexed field as
\begin{equation}
|A|^{2} = h_{ab}A^{as}A^{b}_{s} \: .
\end{equation}
By definition of Sobolev norm, we have
\begin{equation}
\|E_{C}\|_{H_{2}} = \left(\|E_{C}\|^{2}_{L_{2}}+\|D^{2}E_{C}\|^{2}_{L_{2}}\right)^{1/2} \: .
\end{equation}
Using (\ref{lema1}), we have
\begin{equation}
\|E_{C}\|_{L_{2}}\leq C\|\rho\|_{L_{6/5}} \: ,
\end{equation}
and by definition of $\rho$ and using Holder inequality, we have
\begin{eqnarray}
\|\rho\|^{2}_{L_{6/5}} & \leq & \|E A\|^{2}_{L_{6/5}} + \|X \pi\|^{2}_{L_{6/5}} + \|E \partial_{s}h\|^{2}_{L_{6/5}} + \|\mathcal{F} \partial_{s}k\|^{2}_{L_{6/5}} + \|kA\mathcal{F}\|^{2}_{L_{6/5}}\nonumber\\
& \leq & \|E\|^{2}_{L_{2}}\|A\|^{2}_{L_{3}} + \|X\|^{2}_{L_{3}}\|\pi\|^{2}_{L_{2}} + \|E\|^{2}_{L_{2}}\|\partial_{s}h\|^{2}_{L_{3}} + \|\partial_{s}k\|^{2}_{L_{3}}\|\mathcal{F}\|^{2}_{L_{2}} + \|kA\|^{2}_{L_{3}}\|\mathcal{F}\|^{2}_{L_{2}}\nonumber\\
& \leq & \|E\|^{2}_{L_{2}}\|A\|^{2}_{H_{1}} + \|X\|^{2}_{H_{1}}\|\pi\|^{2}_{L_{3}} + \|E\|^{2}_{L_{2}}\|\partial_{s}h\|^{2}_{H_{1}} + \|\mathcal{F}\|^{2}_{L_{2}}\left(\|\partial_{s}k\|^{2}_{H_{1}} + \|kA\|^{2}_{H_{1}}\right) \: ,\nonumber\\
\end{eqnarray}
where we used the Sobolev inequality to show that $\|u\|_{L_{3}} \leq C \|u\|_{H_{1}}$.

Since $\partial_{s}E_{C}^{s}= 4\pi\rho$, by taking Fourier transform on both sides, we have
\begin{equation}
\hat{E}_{C}^{s} = -i4\pi\frac{k^{s}}{|k|^{2}}\hat{\rho} \: ,
\end{equation}
hence, we have
\begin{eqnarray}
\|D^{2}E_{C}\|_{L_{2}}^{2} & = & \int_{\mathbb{R}^{3}} d^{3}x \left(\partial_{q}\partial_{r}E_{Cs}\right)^{2}\nonumber\\
& = & \int_{\mathbb{R}^{3}} d^{3}k \left|k^{2}\hat{E}_{Cs}\right|^{2}\nonumber\\
& = & (4\pi)^{2}\int_{\mathbb{R}^{3}} d^{3}k \left|k_{s}\hat{\rho}\right|^{2}\nonumber\\
& = & (4\pi)^{2}\int_{\mathbb{R}^{3}} d^{3}x \left(\partial_{s}\rho\right)^{2} = (4\pi)^{2}\|\partial_{s}\rho\|_{L_{2}}^{2} \: .
\end{eqnarray}

Using definition of Sobolev norm and Schauder ring property, we have
\begin{eqnarray}
\|\partial_{s}\rho\|_{L_{2}}^{2} & \leq & \|E A\|^{2}_{H_{1}} + \|X \pi\|^{2}_{H_{1}} + \|E \partial_{s}h\|^{2}_{H_{1}} + \|\mathcal{F} \partial_{s}k\|^{2}_{H_{1}} + \|kA\mathcal{F}\|^{2}_{H_{1}}\nonumber\\
& \leq & \|E\|^{2}_{H_{1}}\|A\|^{2}_{H_{1}} + \|X\|^{2}_{H_{1}}\|\pi\|^{2}_{H_{1}} + \|E\|^{2}_{H_{1}}\|\partial_{s}h\|^{2}_{H_{1}} + \|\mathcal{F}\|^{2}_{H_{1}}\left(\|\partial_{s}k\|^{2}_{H_{1}} + \|kA\|^{2}_{H_{1}}\right) \: .\nonumber\\
\end{eqnarray}

By definition of field strength, we get an estimate
\begin{equation}
\|\mathcal{F}\|_{H_{1}} \leq C\left(\|A\|_{H_{2}} + \|A\|^{2}_{H_{2}}\right) \: , \label{FieldStrength}
\end{equation}
and using condition (\ref{GaugeKineticCondition}),
\begin{eqnarray}
\|\partial_{s}h\|_{H_{1}} & \leq & \|\partial_{i}h\|_{H_{1}}\|\partial_{s}\phi\|_{H_{1}} \: ,\nonumber\\
& \leq & C\|\phi\|^{2}_{H_{2}}
\end{eqnarray}
and the fact that gauge kinetic function is a holomorphic function, we have
\begin{eqnarray}
\|E_{C}\|_{H_{2}} & \leq & C \left(\|A\|_{H_{2}}\|E\|_{H_{1}} + \|\phi\|_{H_{2}}\|\pi\|_{H_{1}}  + \|\phi\|^{2}_{H_{2}}\|E\|_{H_{1}} + \|A\|_{H_{2}}\|\phi\|^{2}_{H_{2}}\right.\nonumber\\
& &  \left.+ \|A\|^{2}_{H_{2}}\|\phi\|^{2}_{H_{2}} + \|A\|^{3}_{H_{2}}\|\phi\|_{H_{2}}  + \|A\|^{2}_{H_{2}}\|\phi\|^{2}_{H_{1}}\right) \: ,
\end{eqnarray}
which shows that $E_{C}$ lies in $H_{2}$ for all $u \in \mathcal{H}$ and the first part of the lemma is proven.

Now we will prove the Lipshitz condition for $E_{C}$ which is important for proving the local existence of the evolution equation. From the definition of $E_{C}$ we have
\begin{eqnarray}
E_{C}^{as}(u')-E_{C}^{as}(u) & = & \frac{\hat{r}^{s}}{4\pi r^{2}}\ast\left(-f^{c}_{bf}\left\{h'^{ab}h'_{cd}A'^{f}_{r}E'^{r}_{d} - h^{ab}h_{cd}A^{f}_{r}E^{r}_{d}\right\}\right. \nonumber\\
& &\quad- h'^{ab}g'_{i\bar{j}}\left\{X'^{i}_{b}\bar{\pi}^{'\bar{j}} + \bar{X}_{'b}^{\bar{j}}\pi'^{i}\right\} +
h^{ab}g_{i\bar{j}}\left\{X^{i}_{b}\bar{\pi}^{\bar{j}} + \bar{X}_{b}^{\bar{j}}\pi^{i}\right\}\nonumber\\
& & \quad+ h'^{ab}\left(k'_{de}f^{d}_{bc} - k'_{bd}f^{d}_{ce}\right)A'^{cr}\tilde{\mathcal{F}}'^{es0} - h^{ab}\left(k_{de}f^{d}_{bc} - k_{bd}f^{d}_{ce}\right)A^{cr}\tilde{\mathcal{F}}^{es0}\nonumber\\
& &\quad \left. - h'^{ab}E'^{cs}\partial_{s}h'_{bc} + h^{ab}E^{cs}\partial_{s}h_{bc} + h'^{ab}\tilde{\mathcal{F}}'^{cs0}\partial_{s}k'_{bc} - h^{ab}\tilde{\mathcal{F}}^{cs0}\partial_{s}k_{bc}\right) \: .\nonumber\\
\end{eqnarray}
Recalling the estimate from the prove of the first part of lemma and using assumption that $\partial_{s}h_{ab}$ is locally Lipshitz, then we have
\begin{eqnarray}
\left\|E_{C}^{as}(u')-E_{C}^{as}(u)\right\|_{H_{2}} & \leq & K\left\{\|A'-A\|_{H_{2}}\left[\|E'\|_{H_{1}} + \|\phi\|_{H_{2}}^{2}\left( \|A\|_{H_{2}} + \|A\|_{H_{2}}\|A'+A\|_{H_{2}} \right.\right.\right.\nonumber\\
& &\left.\left.+\|A\|^{2}_{H_{2}}+\|A'+A\|_{H_{2}}\right)\right] + \|E'-E\|_{H_{1}}\left(\|A\|_{H_{2}} + \|\phi\|^{2}_{H_{2}}\right)\nonumber\\
& &+ \|\pi'-\pi\|_{H_{1}}\|\phi\|_{H_{2}} + \|\phi'-\phi\|_{H_{2}}\left[\|E\|_{H_{1}}\|\phi'+\phi\|_{H_{2}}\right. \nonumber\\
& &\left.\left.+ \|A\|_{H_{2}}\left(1+\|A\|_{H_{2}}\right)\|\phi'+\phi\|_{H_{2}} + \|A'\|^{2}_{H_{2}}\left(1+\|A'\|_{H_{2}}\right)\right]\right\}\nonumber\\
& = & C_{1}(\|u'\|,\|u\|)\|A'-A\|_{H_{2}} + C_{2}(\|u'\|,\|u\|)\|E'-E\|_{H_{1}} \nonumber\\
& &+ C_{3}(\|u'\|,\|u\|)\|\phi'-\phi\|_{H_{2}} + C_{4}(\|u'\|,\|u\|)\|\pi'-\pi\|_{H_{1}} \: .
\end{eqnarray}
Then for $u \in \mathcal{H}$, $E_{C}$ is a locally Lipshitz function which proves the second part of the lemma.

\end{proof}

The final proof of Lipshitz condition for $J$ is established by showing that the other components of $J$ are also Lipshitz functions. Using Sobolev inequality, we have
\begin{eqnarray}
\|J_{2}(u')-J_{2}(u)\|_{H_{1}} & \leq & C\bigg\{ \|A'-A\|_{H_{2}}\Big[ \|A\|_{H_{2}} + \|A\|^{2}_{H_{2}} + \|E\|_{H_{1}} \|\phi'\|^{2}_{H_{2}} +  \|\phi\|^{2}_{H_{2}} + \|E\|_{H_{1}}\|\phi'\|^{2}_{H_{2}} \nonumber\\
& &+\|A'+A\|_{H_{2}}\left(1+ \|A'\|_{H_{2}} + \|\phi\|^{2}_{H_{2}} + \|\phi'\|_{H_{2}}\|\pi'\|_{H_{1}}\right) + \|\phi'\|_{H_{2}}\|\pi'\|_{H_{1}} \Big]\nonumber\\
& & +\|E'-E\|_{H_{1}}\Big[ \|\phi'\|_{H_{2}} \|\pi'\|_{H_{1}} +  \|\phi\|^{2}_{H_{2}} +  \|A'\|_{H_{2}} \|\phi'\|^{2}_{H_{2}}\Big]\nonumber\\
& & +  \|\phi'-\phi\|_{H_{2}}\Big[ \|\phi'\|_{H_{2}}  + \|\phi'+\phi\|_{H_{2}}\left( \|E\|_{H_{1}} +  \|A'\|_{H_{2}} + \|A'\|^{2}_{H_{2}} +  \|A\|_{H_{2}} \right)\nonumber\\
& &+\|\pi\|_{H_{1}}\|E\|_{H_{1}} + \|\pi\|_{H_{2}}\left(\|A\|_{H_{2}} + \|A\|^{2}_{H_{2}}\right) + \|\phi\|_{H_{2}}\Big]\nonumber\\
& &+\|\pi'-\pi\|_{H_{1}}\Big[\|E\|_{H_{1}} + \|\phi'\|_{H_{2}}\left(\|A\|_{H_{2}} + \|A\|^{2}_{H_{2}}\right)\Big]\bigg\}\nonumber\\
& = & C_{5}(\|u'\|,\|u\|)\|A'-A\|_{H_{2}} + C_{6}(\|u'\|,\|u\|)\|E'-E\|_{H_{1}} \nonumber\\
& &\quad+ C_{7}(\|u'\|,\|u\|)\|\phi'-\phi\|_{H_{2}} + C_{8}(\|u'\|,\|u\|)\|\pi'-\pi\|_{H_{1}} \: ,\label{J2}
\end{eqnarray}
and
\begin{eqnarray}
\left\|J_{4}(u')-J_{4}(u)\right\| _{H_{1}} & \leq & C\left\{\left\|\Gamma'-\Gamma\right\| _{H_{2} } \Big[\|\phi'\|_{H_{2}} + \|A'\|_{H_{2}} \|\phi'\|^{2}_{H_{2}} + \|A'\|^{2}_{H_{2}}\|\phi'\|^{2}_{H_{2}}+\|\pi\|^{2}_{H_{1}}\Big]\right. \nonumber\\
& & \|E'-E\|_{H_{1}}\Big[\|\phi\|_{H_{2}}\|E'+E\|_{H_{1}} + \|\phi'\|_{H_{2}}\|\mathcal{F}\|_{H_{1}}\Big] + \|\pi'-\pi\|_{H_{1}}\|\Gamma\|_{H_{2}} \nonumber\\
& &\|A'-A\|_{H_{2}}\Big[\|\phi'\|_{H_{2}} + \|A'-A\|_{H_{2}}\left(\|\phi'\|_{H_{2}} + \|\Gamma\|_{H_{2}}\|\phi\|^{2}_{H_{2}}\right) + \|\Gamma\|_{H_{2}}\|\phi\|^{2}_{H_{2}}\nonumber\\
& & +\left(1 + \|A'+A\|_{H_{2}}\right)\left(\|\phi\|_{H_{2}}\|\mathcal{F}'-\mathcal{F}\|_{H_{1}} + \|\phi\|_{H_{2}}\|E'\|_{H_{1}}\right)\Big]\nonumber\\
& & + \|\phi'-\phi\|_{H_{2}}\Big[\|\Gamma\|_{H_{2}}\|\phi'+\phi\|_{H_{2}}\left(1+\|A\|^{2}_{H_{2}}\right) + \|A'\|_{H_{2}} + \|A\|_{H_{2}} \nonumber\\
& & \left.+ \|A\|^{2}_{H_{2}} + \|E\|_{H_{2}}\|\mathcal{F}\|_{H_{1}} + \|E\|^{2}_{H_{2}} + \|\mathcal{F}\|^{2}_{H_{2}}\Big] + \|\partial_{\bar{j}}V'_{S}-\partial_{\bar{j}}V_{S}\|_{H_{1}} \right\} \: .\nonumber\\
\end{eqnarray}
Using estimate (\ref{ChristoffelSymbol}) and (\ref{FieldStrength}) and using the assumption in (\ref{ScalarPotential}), we have
\begin{eqnarray}
\left\|J_{4}(u')-J_{4}(u)\right\| _{H_{1}} & \leq & C_{9}(\|u'\|,\|u\|)\|A'-A\|_{H_{2}} + C_{10}(\|u'\|,\|u\|)\|E'-E\|_{H_{1}} \nonumber\\
& &+ C_{11}(\|u'\|,\|u\|)\|\phi'-\phi\|_{H_{2}} + C_{12}(\|u'\|,\|u\|)\|\pi'-\pi\|_{H_{1}} \: . \label{J4}
\end{eqnarray}
Thus, from lemma \ref{LemmaEC}, (\ref{J2}) and (\ref{J4}), the nonlinear operator $J$ is a mapping from $\mathcal{H}$ to itself and satisfies locally Lipshitz condition.

Then by Segal's theorem, for any initial data $u_{0}$ in $\mathcal{H}$, there exists a positive constant $T>0$ depending on $u_{0}$ such that the equation (\ref{eom}) admits a unique mild solution which is continuous in $\mathcal{H}$ for an interval $[0,T]$, i.e. $u \in C([0,T],\mathcal{H})$ which satisfies (\ref{IntegralEq}). Furthermore, the solutions can be extended into maximal mild solutions on interval $[0,T_{\mathrm{max}})$ such that either
\begin{enumerate}
\item $T_{\mathrm{max}} = +\infty$ and the equation (\ref{eom}) admits a global solution, or
\item $\|u(t)\|\rightarrow \infty$ as $t\rightarrow T_{\mathrm{max}}$ and the solution blow up on a finite time $T_{\mathrm{max}}$ \: .
\end{enumerate}

If the initial value $u_{0}$ lies in $D_{\mathcal{A}}$ then equation (\ref{eom}) admits a unique classical solutions $u(t)$ for an interval $[0,T_{\mathrm{max}})$ which remains in $D_{\mathcal{A}}$ and satisfies differential equations
\begin{equation}
\frac{du(t)}{dt} = \mathcal{A}u(t) + J(u(t)) \: ,
\end{equation}
with $\frac{du(t)}{dt}$ is a continuous curve in $\mathcal{H}$. Then, the solutions $u$ belong to,
\begin{equation}
u \in C^{1}\left([0,T_{\mathrm{max}}),\mathcal{H}\right) \cap C\left([0,T_{\mathrm{max}}),D_{\mathcal{A}}\right) \: ,
\end{equation}
such that either $T_{\mathrm{max}} = +\infty$ or $\|u(t)\|\rightarrow \infty$ as $t\rightarrow T_{\mathrm{max}}$.

Now, we shall show that the solution of modified equation (\ref{eom}) which satisfy the constraint is the solution of original equation.

Following result of \cite{Eardley} for the constraint equation, we have
\begin{equation}
\|\mathcal{C}(t)\|^{2}_{L_{2}} \leq  \|\mathcal{C}(0)\|^{2}_{L_{2}}\exp\left(\int_{0}^{t}C'\|E(s)\|_{H_{1}}\right) \: .
\end{equation}
Since the initial value of the constraint equation is $\mathcal{C}(0)=0$, then $\mathcal{C}(t)$ is vanished in the interval existence of $u\left(t\right)$. Hence, the solutions of modified equation (\ref{eom}) always satisfy the constraint equation then it is the solutions of the original equation.

Therefore, we have proven,
\begin{theorem}
Let $u_{0}$ be any initial data lying in $\mathcal{H}=(H_{2}\times H_{1})^{3}$. If the conditions
(\ref{ScalarPotential}), (\ref{KahlerPotential}), (\ref{ChristoffelSymbol}), and (\ref{ConditionKahler}) are satisfied,
 then there exists a positive constant $T_{\mathrm{max}}>0$ depending on $u_{0}$ such that the integral
 equation (\ref{IntegralEq}) admits a unique maximal solution $u(t)$ on interval $[0,T_{\mathrm{max}})$
 which belongs to $u \in C\left([0,T_{\mathrm{max}}),\mathcal{H}\right)$ and either
\begin{enumerate}
\item $T_{\mathrm{max}} = +\infty$ and the equation (\ref{eom}) admits a global solution, or
\item $\|u(t)\|\rightarrow \infty$ as $t\rightarrow T_{\mathrm{max}}$ and the solution blow up on a finite time $T_{\mathrm{max}}$ \: .
\end{enumerate}

Furthermore, if $u_{0}$ lies in $D_{\mathcal{A}}$ and satisfies
the constraint $\mathcal{C}(u_{0})=0$, then the equation
(\ref{eom}) admits a unique classical solution $u(t)$ for an
interval $[0,T_{\mathrm{max}})$ which remains in
$D_{\mathcal{A}}$, and belong to $u \in
C^{1}\left([0,T_{\mathrm{max}}),\mathcal{H}\right) \cap
C\left([0,T_{\mathrm{max}}),D_{\mathcal{A}}\right)$, and satisfy
the constraint $\mathcal{C}(u(t))=0$ such that either
$T_{\mathrm{max}} = +\infty$ or $\|u(t)\|\rightarrow \infty$ as
$t\rightarrow T_{\mathrm{max}}$.
\end{theorem}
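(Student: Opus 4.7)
The plan is to obtain the theorem as a direct application of Segal's semilinear evolution theorem \cite{Segal1} to the system (\ref{eom}), using the decomposition established in Section~\ref{FEMs} together with the estimates assembled in Sections~\ref{FEMs}--\ref{SIM}. Two hypotheses must be verified: (i) the linear operator $\mathcal{A}$ generates a strongly continuous one-parameter semigroup on $\mathcal{H}=(H_{2}\times H_{1})^{3}$, with dense domain $D_{\mathcal{A}}$; and (ii) the nonlinear operator $J$ is a self-mapping of $\mathcal{H}$ that is locally Lipschitz with a constant depending monotonically on the norms of its arguments. Granting these, Segal's theorem yields a unique mild solution of the integral equation (\ref{IntegralEq}) on a maximal interval $[0,T_{\mathrm{max}})$ with the familiar blow-up dichotomy, and promotes it to a classical solution in $C^{1}([0,T_{\mathrm{max}}),\mathcal{H})\cap C([0,T_{\mathrm{max}}),D_{\mathcal{A}})$ whenever $u_{0}\in D_{\mathcal{A}}$.

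For step (i) I would invoke the decomposition of $(A,E)$ into transverse and longitudinal components already written out in (\ref{LinearizedEquations1})--(\ref{LinearizedEquations2}): each of the pairs $(A_{T},E_{T})$, $(\phi,\pi)$, $(\bar{\phi},\bar{\pi})$ satisfies the free wave equation and therefore generates a one-parameter group on $H_{2}\times H_{1}$ by standard spectral theory, while $(A_{L},E_{L})$ is frozen by the linearized flow. Assembling these blocks produces the semigroup $e^{\mathcal{A}t}$ on $\mathcal{H}$, which is all Segal's framework requires. The genuine content of the theorem sits in step (ii), which will be the main obstacle.

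In step (ii) the structural hypotheses of the paper are fully consumed. The component $J_{1}=\partial^{s}(-\tfrac{1}{4\pi r}\ast\rho)$ is exactly $E_{C}$, so Lemma~\ref{LemmaEC} already provides that $J_{1}$ maps $\mathcal{H}$ into $H_{2}$ and is locally Lipschitz under the gauge-kinetic assumption (\ref{GaugeKineticCondition}); $J_{3}$ and $J_{5}$ vanish identically. The remaining slots $J_{2}$, $J_{4}$, $J_{6}$ contain polynomial nonlinearities in the components of $u$ together with Christoffel symbols, derivatives of the gauge kinetic functions, and the derivative of the scalar potential. I would bound each term using the Schauder ring property of Sobolev spaces over $\mathbb{R}^{3}$, the Sobolev inequality stated before Lemma~\ref{LemmaEC}, the polynomial estimates for $\Gamma$ furnished by Lemma~\ref{LemmaSIM}, the field-strength bound (\ref{FieldStrength}), and the local Lipschitz hypothesis (\ref{ScalarPotential}) on $\partial V$; the inequalities (\ref{J2}) and (\ref{J4}) derived above are precisely the outputs of this calculation. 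Adding the four component estimates gives $\|J(u')-J(u)\|_{\mathcal{H}}\le C(\|u'\|,\|u\|)\,\|u'-u\|_{\mathcal{H}}$ with $C$ monotone increasing, which is the hypothesis required by Segal's theorem.

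With local existence of the modified system in hand, the last item is to confirm that its solutions are also solutions of the original equations of motion. Following \cite{Eardley}, I would differentiate $t\mapsto\|\mathcal{C}(t)\|_{L_{2}}^{2}$ along the flow, exploit the antisymmetry of the structure constants and the definition (\ref{rho}) of $\rho$, and deduce a Gronwall-type inequality $\|\mathcal{C}(t)\|_{L_{2}}^{2}\le\|\mathcal{C}(0)\|_{L_{2}}^{2}\exp\!\big(\int_{0}^{t}C'\|E(s)\|_{H_{1}}\,ds\big)$. Since $\mathcal{C}(0)=0$ by hypothesis, this forces $\mathcal{C}(t)\equiv 0$ on $[0,T_{\mathrm{max}})$, and therefore $E_{L}\equiv E_{C}$ by Lemma~\ref{LemmaConstraint}, so the modified and original systems agree along the trajectory. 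This will complete the proof.
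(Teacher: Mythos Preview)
Your proposal is correct and follows essentially the same route as the paper: verify that $\mathcal{A}$ generates a semigroup via the transverse/longitudinal decomposition (\ref{LinearizedEquations1})--(\ref{LinearizedEquations2}), establish the local Lipschitz property of $J$ componentwise using Lemma~\ref{LemmaEC} for $J_{1}=E_{C}$ and the estimates (\ref{J2}), (\ref{J4}) for the remaining slots, invoke Segal's theorem, and then propagate the constraint via the Gronwall-type bound from \cite{Eardley}. The ingredients you list---Lemma~\ref{LemmaSIM}, (\ref{FieldStrength}), (\ref{ScalarPotential}), and the Schauder ring/Sobolev inequalities---are exactly those the paper deploys, in the same order.
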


\appendix

\section{Convention and Notation}
The purpose of this appendix is to inform our conventions used in this paper. The spacetime
metric is flat with the signature $(-,+,+,+)$.

The following indices are used:
\begin{description}
\item[] $\mu,\nu,\rho,\sigma = 0,\ldots,3$, \hspace{30mm} label
4-dimensional flat spacetime \item[] $r,s,p,q = 1,2,3$,
\hspace{35mm} label 3-dimensional flat space \item[]
$i,\bar{i},j,\bar{j},k,\bar{k} = 1,\ldots,n_{c}$, \hspace{22.5mm}
label $n_{c}$ dimensional K\"ahler manifold \item[] $a,b,c,d =
1,\ldots,n_{v}$, \hspace{29mm} label the gauge index
\end{description}

\section{List of Inequality}
In this appendix, we mention some basic inequalities used in this paper. For detail reviews, see \cite{Reed, Adams}.

\subsection*{Young inequality for convolution}

Suppose $f \in L_{p}(\mathbb{R}^{d})$ and $g \in L_{q}(\mathbb{R}^{d})$ and
\begin{equation}
\frac{1}{p} + \frac{1}{q} = \frac{1}{r} + 1 \: ,
\end{equation}
with  $p,q,r \in \mathbb{R}$  and $1\leq p,q,r \leq \infty$. Then
\begin{equation}
\|f\ast g\|_{L_{r}} \leq \|f\|_{L_{p}}\|g\|_{L_{q}} \: .
\end{equation}

If $g \in L_{q,w}(\mathbb{R}^{d})$ where $L_{q,w}$ is weak $L_{q}$ space, then
\begin{equation}
\|f\ast g\|_{L_{r}} \leq \|f\|_{L_{p}}\|g\|_{L_{q,w}} \: ,
\end{equation}
with $p,q,r$ are defined and satisfy condition above.

\subsection*{Holder inequality}

Let $p,q,r$ be positive numbers and satisfy $p,q,r \leq 1$ and
\begin{equation}
\frac{1}{p} + \frac{1}{q} = \frac{1}{r}.
\end{equation}
Suppose $f \in L_{p}(\mathbb{R}^{d})$ and $g \in L_{q}(\mathbb{R}^{d})$, then $fg \in L_{r}(\mathbb{R}^{d})$ and
\begin{equation}
\|fg\|_{L_{r}} \leq \|f\|_{L_{p}}\|g\|_{L_{q}} \: .
\end{equation}

\subsection*{Sobolev inequality}

Let $q,m$ be real numbers with $1\le q,m\le \infty $ and $r,s$ are integers where $0\le r<s$ which satisfy
\begin{equation}
\frac{1}{p} =\frac{r}{3} +\theta\left(\frac{1}{m} -\frac{s}{3} \right)+(1-\theta)\frac{1}{q} \: ,
\end{equation}
with $r/s\le \theta\le 1$ and $p$ is non negative. For $u \in H_{s}(\mathbb{R}^{3})\cap L_{q}(\mathbb{R}^{3})$, there is a positive constant $C$ which depends only on $m,j,q,r$ and $\theta$ such that the following inequality holds
\begin{equation}
\|\partial ^{r} u\|_{L_{p}} \leq C\|\partial^{s} u\|_{L_{m}}^{\theta} \, \|u\|_{L_{q}}^{1-\theta} \: .
\end{equation}

\section*{Acknowledgments}

Our research is supported by Hibah Kompetensi DIKTI 2012 No. 781a/I1.C01/PL/2012 and Riset Desentralisasi DIKTI-ITB 2012 No.003.8/TL-J/DIPA/SPK/2012.

\end{document}